\renewcommand{\theequation}{\thesection\arabic{equation}}
\renewcommand{\@algocf@capt@plain}{above}
\newtheorem{theorem}{Theorem}
\theoremstyle{definition}
\DeclarePairedDelimiter\ceil{\lceil}{\rceil}
\newcommand{\one}{{\mathbf 1}}
\newcommand{\I}{{\mathbf I}}
\newcommand{\calh}{{\mathcal H}}
\def\RR{{\mathbf{ R}}}
\newcommand{\beq}{\begin{equation}}
\newcommand{\eeq}{\end{equation}}
\newcommand{\beqn}{\begin{eqnarray}}
\newcommand{\eeqn}{\end{eqnarray}}
\newcommand{\rv}{\mbox{Var}}
\newcommand{\E}{\mbox{E}}
\begin{document}


\renewcommand{\baselinestretch}{2}


\markboth{\hfill{\footnotesize\rm Radu V. Craiu  AND Xiao-Li Meng} \hfill}
{\hfill {\footnotesize\rm L-lag coupling with Control Variates} \hfill}

\renewcommand{\thefootnote}{}
$\ $\par


\fontsize{12}{14pt plus.8pt minus .6pt}\selectfont \vspace{0.8pc}
\centerline{\large\bf Double Happiness: Enhancing the Coupled  }
\vspace{2pt} 
\centerline{\large\bf Gains of L-lag Coupling  via Control Variates}
\vspace{.4cm} 
\centerline{Radu V. Craiu and Xiao-Li Meng} 
\vspace{.4cm} 
\centerline{\it University of Toronto and Harvard University}
 \vspace{.55cm} \fontsize{9}{11.5pt plus.8pt minus.6pt}\selectfont


\begin{quotation}
\noindent {\it Abstract:}
{\bf The recently proposed L-lag coupling for unbiased Markov chain Monte Carlo (MCMC)  calls for a joint celebration by MCMC practitioners and theoreticians. For practitioners, it circumvents the thorny issue of deciding the burn-in period or when to terminate an MCMC sampling process, and opens the door for safe parallel implementation. For theoreticians, it provides a powerful tool to establish elegant and easily estimable bounds on the exact error of an MCMC approximation at any finite number of iterates. A serendipitous observation about the bias-correcting term  leads us to introduce naturally available control variates into the L-lag coupling estimators. In turn, this extension enhances the coupled gains of L-lag coupling,  because it results in more efficient unbiased estimators, as well as a better bound on the total variation error of MCMC iterations, albeit the gains diminish as L increases. Specifically, the new upper bound  is theoretically guaranteed to never exceed the one given previously.  We also argue that L-lag coupling represents a coupling for the future, breaking from the coupling-from-the-past  type of perfect sampling, by reducing the generally unachievable requirement of being \textit{perfect} to one of being \textit{unbiased}, a worthwhile trade-off for ease of implementation in most practical situations.   The theoretical analysis is supported by numerical experiments that show tighter bounds and a gain in efficiency when control variates are introduced.}\\

\vspace{9pt}
\noindent {\it Key words and phrases:}
Coupling from the Past, Maximum coupling, Median absolute deviation, Parallel implementation, Total variation distance, Unbiased MCMC.
\par
\end{quotation}\par

\def\thefigure{\arabic{figure}}
\def\thetable{\arabic{table}}

\renewcommand{\theequation}{\thesection.\arabic{equation}}

\fontsize{12}{14pt plus.8pt minus .6pt}\selectfont

\section{If Being Perfect Is Impossible, Let's Try Being Unbiased}\label{sec:intro}
\subsection{Perfect Coupling  -- Too Much To Hope For?}
We thank Pierre Jacob and his team for a series of  articles \citep[e.g.,][]{jacob2020unbiased, jacob2019smoothing, heng2019unbiased, biswas2019estimating} that revitalized our experience \citep[e.g.,][]{murdoch2001,meng:multistage-backwards, craiu2011perfection, stein2013practical} of working on  coupling from the past \citep[CFTP;][] {propp-wilson:exact-sampling,propp-wilson} and, more generally, perfect sampling.   The clever ``cross-time coupling" idea of \cite{glynn2014exact}, which can be considered a form of coupling for the future (CFTF),   allows us to move away from the CFTP framework, which became popular  around the turn of the century with its promise of providing perfect/exact Markov chain Monte Carlo (MCMC) samplers \citep[e.g., see the annotated bibliography of][]{exact-bibliography}. However,  research progress on perfect or exact samplers has slowed  significantly since then, because they are very challenging, if not impossible, to develop for many routine Bayesian computational problems \citep[e.g., see][]{murdoch2001}. 

In its most basic form, a CFTP-type perfect sampler couples a Markov chain $\{X_t, t\ge 0\}$ with itself, but from different starting points, and runs two or more chains until they coalesce at a time $\tau$. This apparent convergence does not guarantee, in general, that $X_{\tau}$ is from the desired stationary distribution $\pi(x)$. By shifting the entire chain to ``negative time"(i.e., the past), $\{X_t,  t\le 0\}$,  \cite{propp-wilson:exact-sampling} have shown that if we follow this coalescent chain until it reaches the present time, that is, $t=0$, then the resulting $X_0$ will be exactly from $\pi(x)$. Perhaps the most intuitive way to understand this scheme is to realize that running a chain from its infinite past ($t=-\infty$) to the present ($t=0$) is mathematically equivalent to running the chain from the present ($t=0$) to the infinite future ($t=+\infty$). The CFTP is a clever way of realizing this seemingly impossible task, relying on the fact that if the coalescence occurs regardless of how we have chosen the starting point, then the chain has ``forgotten" its origin, and hence settled in the perfect asymptotic distribution.

However, being perfect is never easy, especially in the mathematical sense. No error of any kind is allowed, and this requirement has manifested in two ways that greatly limit the practicality of perfect sampling. First, constructing a perfect sampler, especially for distributions with continuous and unbounded state spaces---which are ubiquitous in routine statistical  applications---is a very challenging task in general, despite its great success for problems with some special structures, such as  certain monotonic properties \citep[see][]{MR1978833,MR1906013, MR2052900, sw-hub,ensor2000simulating, MR2052900,mur-que}.
Second, even if a perfect sampler is devised, it can be excruciatingly slow, because it refuses to deliver an output until it can guarantee its perfection,  and one must devise problem-specific strategies to speed this up 
\citep[e.g.,][]{MR1699662, MR2016872, moller:conditional,MR2016872,MR2154791}.

\subsection{Unbiased Coupling -- A New Hope?}
A  relaxation of the exact sampling paradigm with important practical consequences  has been  proposed by \cite{glynn2014exact} and \cite{glynn2016exact}, who put forth strategies for the exact estimation of integrals using  MCMC.  The difference between exact sampling and exact estimation is a large conceptual leap that allows us to bypass most of the difficulties of perfect sampling, while maintaining some of its important benefits. Building on the work of Glynn and co-authors,  the L-lag coupling of  \cite{biswas2019estimating} and \cite{jacob2020unbiased}  aims to deliver unbiased estimators of $\E[h(X_\pi)]$, for any (integrable) $h$, where $X_\pi$ denotes a random variable defined by $\pi(X)$. 
One may question if this is  really a weaker requirement because the fact that
$\E[h(X_\pi)]=\E[h(X_{\tilde\pi})]$ for all (integrable) $h$ immediately implies that $\pi(X)=\tilde \pi(X)$ (almost surely). This is where the innovation of L-lag coupling lies, because it does not couple a chain with itself from two or more starting points [e.g., two extreme states, as with monotone coupling; see \cite{propp-wilson:exact-sampling}]. Instead, it couples two  chains that have the same transition probability and  start from the same starting point or, more generally, the same initial distribution $\pi_0$, but are time-shifted by an  integer lag, $L >0$.  

To illustrate, consider the case of $L=1$,  which was the focus of \cite{jacob2020unbiased}.  Two chains ${\cal X}=\{X_t, t\ge 0\}$ and ${\cal Y} = \{Y_t, t\ge 0\}$ are coupled in such a way that both of them have the same transition kernel (and, hence, the same target stationary distribution), and there exists with probability one a finite stopping time $\tau$, such that $X_{t}=Y_{t-1}$, for all $t \ge \tau$.  This construction allows them to show that the following estimator based on both $\cal X$ and $\cal Y$,
\beqn
H_k({\cal X,Y}) = h(X_k)+\sum_{j=k+1}^{\tau -1} [h(X_j)-h(Y_{j-1})], 
\label{hk}
\eeqn
is an unbiased estimator for $\E[h(X_\pi)]$, for any $k\ge 0$ (under mild conditions). Heuristically,  this is because the sum in (\ref{hk}) is the same as $\sum_{j=k+1}^\infty[h(X_j)-h(Y_{j-1})]$, because any term with $j\ge \tau$ must be zero, by the coupling scheme.  Furthermore, for the purpose of calculating expectations, we can replace $h(Y_{j-1})$
with $h(X_{j-1})$, for any $j$, because $X_{j-1}$ and $Y_{j-1}$ have identical distributions, by construction.  However, $h(X_k)+\sum_{j=k+1}[h(X_j)-h(X_{j-1})]$ is nothing but $\lim_{t\rightarrow \infty} h(X_t)$, which has the same distribution as $h(X_\pi)$. 

The cleverness of constructing an estimator based on \textit{both} ${\cal X}$ and ${\cal Y}$ to ensure $\E[H_k({\cal X, Y})]=\E[h(X_\pi)]$, for any $h$, bypasses the requirement that $X_\tau$ itself must be perfect.
The series of illustrative and practical examples in \cite{jacob2020unbiased}  and in \cite{jacob2019smoothing}, \cite{heng2019unbiased}, and \cite{biswas2019estimating} provide good evidence of the practicality of this approach.  The use of parallel computation for estimating $I= \E_{\pi}[h(X)]$  supports using 
$\E[\E_{\pi} [h(x)| {\cal{U}}_j]]$, where the inner expectation is the estimate obtained from the $j$th parallel process, ${\cal{U}}_j$, and the outer mean averages over all processes. However, if each inner mean is a biased estimator for $I$, then the accumulation of errors can be seriously misleading. This has been documented in the Monte Carlo literature extensively,  for instance, in \cite{glynn1991analysis} and \cite{nelson2016some}.
Hence, unbiased MCMC designs allow  one to take full advantage of parallel computation strategies, without having to worry about the accumulation of bias as the number of parallel processes increases.

\subsection{Using Control Variates  -- Even Higher Hope?}
The expression (\ref{hk}) also opens a path to explore further improvements, and that is the starting point of our exploration. In \cite{cmdisc},  we noticed that (\ref{hk}) can be expressed equivalently as
\beqn
H_k({\cal X,Y}) = h(X_{(\tau-1)\vee k}) + \sum_{j=k}^{\tau-2}[h(X_j)-h(Y_j)],
\label{hkb}
\eeqn
where $A\vee B=\max\{A, B\}$. Expression (\ref{hk}) renders the insight underlying \cite{jacob2020unbiased}, which is that $H_k({\cal X,Y})$ achieves the desired unbiasedness by providing a \textit{time-forward bias correction} to $h(X_k)$, whenever $\tau> k+1$;  hence coupling for the future. (No correction is needed when $\tau\le k+1$.) The dual expression (\ref{hkb}) indicates that $H_k({\cal X,Y})$ can also be viewed as a \textit{time-backward bias correction} to $h(X_{\tau-1})$ for its imperfection, because $k <\tau -1$. 

Most intriguingly, each correcting term $\Delta_j\equiv h(X_j)-h(Y_j)$ in (\ref{hkb}) has mean zero, by the construction of $\{\cal X, Y\}$. However, the sum $\sum_{j=k}^{\tau-2}[h(X_j)-h(Y_j)]$ does not necessarily have mean zero, because $\tau$ is random and it depends critically on $\{\cal X, Y\}$. Indeed, if this sum had mean zero, then $X_{(\tau-1)\vee k}$ would have been a perfect draw from $\pi(X)$, because then $\E[h(X_{(\tau-1)\vee k})]=\E[h(X_\pi)]$, for any (integrable) $h$, which would imply that $X_{(\tau-1)\vee k}\sim \pi$.  

However, the fact that $\E(\Delta_j)=0$ suggests that we can use any linear combination of $\Delta_j$ as a \textit{control variate} for $H_k({\cal X,Y})$. Using control variates to reduce estimation errors is a well-known technique in the literature on improving MCMC samplers and estimators by using efficiency swindles, such as antithetic and control variates, 
Rao--Blackwellization, and so on, some of which we have explored in the past \citep[e.g.,][]{van2001art, craiu2001antithetic, craiu2005multiprocess, craiu2007acceleration, yu2011center}. For example, for any finite constant $\eta>k+1$, the estimator 
\beqn
H_k^*({\cal X,Y};\eta) 
= H_k({\cal X,Y})   -\sum_{j=k}^{\eta-2}\Delta_j
= h(X_{(\tau-1)\vee k}) + \sum_{j=k}^{\tau-2}\Delta_j -\sum_{j=k}^{\eta-2}\Delta_j
\label{hkc}
\eeqn
shares the mean of $H_k({\cal X,Y})$, but can have a smaller variance, with a judicious choice of $\eta$. Intuitively, this reduction of variance is possible because of the potential partial cancellation (on average) of the  $\Delta_j$ terms in the last two summations in (\ref{hkc}).

Indeed, Section~\ref{sec:CV}  investigates a more general class of control variates, and derives the optimal choice by establishing the minimal upper bound within the class on the total variation distance between the target $\pi$ and $\pi_k$, the distribution of
$X_k$. This leads to both an improved theoretical bound over that of \cite{biswas2019estimating}, as reported in Section~\ref{sec:CV}, as well as a more efficient estimator than (\ref{hk}) owing to a 	 parallel implementation. Section~\ref{sec:estim} describes the estimation methods and algorithms, and Section~\ref{sec:prac} provides examples and illustrations of both kinds of gains.  Section~\ref{sec:fut} discusses some future work.    

\section{Theoretical Gains from Incorporating Control Variates  }
\label{sec:CV}

\setcounter{equation}{0}

\subsection{L-lag Coupling:  An Elegant and Powerful Method}

The scheme of L-lag coupling extends the coupling of $\{X_k, Y_{k-1}\}$ to the more general form of the coupling of $\{X_k, Y_{k-L}\}$,  for some fixed $L\ge 1$, as detailed in \cite{biswas2019estimating}. The significance of this extension can be best understood by expressing the L-lag coupling idea in its mathematically equivalent form of seeking $\tau_L$ such that $X_{k+L}= Y_{k}$, for all $k\ge \tau_L$, and letting $L\rightarrow \infty$ while keeping $k$ fixed.  Heuristically, it is then clear that the larger $L$, the closer the distribution of $Y_{\tau_L}$  is to the target, because $X_{\tau_L+L}$ should converge to $X_\infty\sim \pi$ as $L\rightarrow\infty$, and  ${\cal X}$ and ${\cal  Y}$ share the same target $\pi$. 

Indeed, by extending (\ref{hk}) to a general $L$, \cite{biswas2019estimating} show that (under mild regularity conditions) the total variation distance between $\pi_k$, the distribution of $X_k$, and $\pi$ is  bounded by a very simple function of $\tau_{L}$ and $(k, L)$:
\beqn\label{eq:biswas}
d_{\rm TV}(\pi_k, \pi) \le \E[J_{k,L}],   \qquad {\rm with}\quad  J_{k,L}=\max\left\{0,  \ceil{\frac{\tau_L-L-k}{L}}\right\},
\eeqn
where $\ceil{a}$ denotes the smallest integer that is no less than $a$. We can clearly see the impact of increasing $L$ or $k$, because larger values of either of them make it more likely that $\tau_L-L-k <0$, and hence $J_{k,L}=0$. Perhaps a  clear demonstration of this fact is when $\tau_L$ follows a geometric distribution
with success probability $p$ and state space $\{L+i, \  i\ge 0\}$ (because $\tau_L\ge L$, by definition)
or, equivalently, $\delta=\tau - (L-1)\sim {\rm Geo}(p)$. Then, letting $q=1-p$, we have \citep[see][]{biswas2019estimating}
\beqn\label{eq:j-mean}
d_{\rm TV}(\pi_k, \pi) \le \E[J_{k,L}] =
\frac{q^{k+1}}{1- q^L}.
\eeqn
We see that the bound is a decreasing function of both $k$ and $L$, though it decreases much faster with $k$, which controls the rate of convergence, than it does with $L$, which controls only the (constant) scaling factor. We also observe that the bound can be trivial, because it can be larger than one for small $k$ and/or $L$, whereas 
$d_{\rm TV}$ cannot, suggesting there is room for improvement.  Nevertheless, \eqref{eq:biswas} is a remarkable bound because it encodes all the intricacies relevant for the convergence speed of ${\cal X}$, including the choice of $X_0$, into a univariate (truncated) coupling time
$J_{k,L}$. In the case of (\ref{eq:j-mean}), the bound also immediately establishes the geometric ergodicity of ${\cal X}$, and provides a rather practical way to assess the bound by estimating $p$ or, more generally, by assessing $J_{k,L}$ directly, say, from a parallel implementation (see Section~\ref{sec:estim}). 

It is perhaps even more remarkable to see that the left-hand side of (\ref{eq:biswas}) is a property of the marginal chain ${\cal X}$ (and, equivalently, of the ${\cal Y}$ chain), but its right-hand side depends on the construction of the joint chain $\{\cal X, Y\}$.  This suggests that we can seek improvement by better coupling. Furthermore, as we establish below, even without changing the coupling scheme, we can still obtain better bounds by using more efficient estimators than (\ref{hk}).     

For a general $L$,  the forward-correction expression in (\ref{hk}) becomes \citep{biswas2019estimating} 
\beqn
H_{k,L}({\cal X,Y}) = h(X_k)+\sum_{j=1}^{J_{k,L}} \left[h(X_{k+jL})-h(Y_{k+(j-1)L})\right], 
\label{eq:hkLf}
\eeqn
and it is easy to verify that the backward-correction expression (\ref{hkb}) takes the form  
\beqn\label{eq:hkLb}
H_{k,L}({\cal X,Y}) 
= h(X_{k+LJ_{k,L}}) + \sum_{j=0}^{J_{k,L}-1} \left[h(X_{k+jL})-h(Y_{k+jL})\right].
\eeqn

\noindent{\it Remark 1:} 
The (random) subscript in $X_{k+JL}$ cannot be reduced to $(\tau-L)\vee k$ when $L>1$,  the most obvious extension of the index $(\tau-1)\vee k$ in  (\ref{hkb}).  This is because $k+J_{k,L}L\ge (\tau-L)\vee k$, but the inequality can be strict when $\tau> k+L$. For example,  if $\tau=L+k+M$, where $M$ is a  positive integer less than $L$ (which does not exist when $L=1$), $k+J_{k,L} L=k+L$, but  $(\tau-L)\vee k=k+M$.  

\noindent{\it Remark 2:} Whereas (\ref{eq:hkLf}) and (\ref{eq:hkLb}) are equivalent as equalities, they may lead to different inequalities depending on how we bound their respective right-hand sides. This is both a bonus and a trap, as we discuss below.   

\subsection{Deriving the Optimal Bound over Choices of Control Variates}\label{sec:opt}

For notational simplicity, we drop the variables $k,L$ from the notation of $J_{k,L}$, and we
let $\Delta_{k, j}=h(X_{k+jL})-h(Y_{k+jL})$. Then, we know $\Delta_{k,j}$ has mean zero for any $\{k, j\}$ and $L$.  This means that for any random sequence $\vec{\eta}\equiv\{\eta_j, j\ge 1\}$ such that: (A) it is independent of  $\{{\cal X, Y}\}$, 
and (B) $\sum_{j=1}\E_{\vec{\eta}}|\eta_j|<\infty$,  we can use $C_\eta=\sum_{j\ge 1} \eta_j\Delta_{k, j}$ as a control variate for $H_{k, L}\equiv H_{k, L}({\cal X, Y})$, because $\E[C_\eta]=0$. That is, 
\beqn\label{cont}
\tilde H_{k,L}^{(\vec\eta)}({\cal X, Y})= H_{k, L}({\cal X, Y}) - \sum_{j\ge 1}\eta_j \Delta_{k, j} 
\eeqn
is also an unbiased estimator of $\E[h(X_\pi)]$ with a smaller variance than \eqref{eq:hkLb}. Next, we examine how to choose  $\eta$.

To choose $\vec\eta$,  instead of  minimizing $\rv\left[\tilde H_{k,L}^{(\vec\eta)}\right]$, which is not an easy task and will also likely produce an $h$-dependent solution,  we first follow the argument used by \cite{biswas2019estimating} with a given $\vec eta$. We then minimize a class of bounds of $d_{\rm TV}(\pi_t, \pi)$ over the choice of $\vec\eta$ that satisfies (A) and (B). This  leads to a sharper bound than (\ref{eq:biswas}), a special case corresponding to $\vec\eta=0$, which, in general, is not an optimal choice, as shown below.  

We proceed by using the same argument as in \cite{biswas2019estimating} for proving (\ref{eq:biswas}), but using (\ref{cont})
instead of (\ref{eq:hkLf}).  However, when applying (\ref{cont}), we must retain the expression of $H_{k,L}(\cal{X, Y})$, as given by (\ref{eq:hkLf}). (Interested readers are invited to try using (\ref{eq:hkLb}).)
Specifically,
the unbiasedness of (\ref{cont}) implies that, for any $k\ge 1$, 
\beqn\nonumber 
&&  \E[h(X_\pi) - h(X_k)]
= \E\left\{\sum_{j=1}^{J} \left[h(X_{k+jL})-h(Y_{k+(j-1)L})\right]
- \sum_{j\ge 1}\eta_j \Delta_{k,j}\right\}\\
&=&\hskip -3mm \E\left\{\sum_{j\ge 1} \left[h(X_{k+jL})-h(Y_{k+(j-1)L})\right]1_{\{j\le J\}}- \sum_{j\ge 1} \eta_j \left[h(X_{k+jL})-h(Y_{k+jL})\right]\right\}\label{key}\\
&=& \hskip -3mm
\E\left\{\sum_{j\ge 1} h(X_{k+jL})[1_{\{j\le J\}}-\eta_j]+ \sum_{j\ge 1} h(Y_{k+jL})[\eta_j- 1_{\{j+1\le J\}}] - h(Y_k)1_{\{0< J\}}\right\}.\nonumber
\eeqn
The interchanges of sum and expectation  in the (infinite) sums hold under assumption (B) and the additional assumption that  the $h$ function is  bounded. To compute the total variation distance, let $h\in \calh=\{h:  \sup_{x}|h(x)|\le 1/2\}$,  as in \cite{biswas2019estimating}.  Consequently, (\ref{key}) implies 
\beqn\nonumber
d_{\rm TV}(\pi_k, \pi)
&\le &  \frac{1}{2}\left\{\sum_{j\ge 1} \E|1_{\{j\le J\}}-\eta_j|+ \sum_{j\ge 1}\E|\eta_j- 1_{\{j\le J-1\}}|+\Pr(0< J) \right\}\nonumber \\
&=&\sum_{j\ge 1} \E|1_{\{j\le \tilde J\}}-\eta_j|+0.5\Pr(J>0),
\label{newbound}
\eeqn
where $\tilde J=J-\xi$ and  $\xi \sim Bernoulli(0.5)$  is  independent of $J$.  Note that the support for $\tilde J$ is $\{-1, 0, 1, \ldots \}$. Set
\beqn
S_j&=&\Pr(\tilde J\ge j)=\Pr(J>j)+0.5\Pr(J=j), \quad {\rm for\ any}\ j \ge 0. 
\eeqn
Recall that for any given random variable $V$, $\min_{U\perp V} E|V-U|=E|V-m_V|$, where $m_V$
is a median of $V$, and the notation $\min_{U\perp V}$ means to minimize over all $U$ that are independent of $V$.  Hence, in order
to minimize (\ref{newbound}) over $\vec{\eta}$, we should set $\eta_j$ to be the median of the Bernoulli random variable $1_{\{j\le \tilde J\}}$, that is, $\eta_j=1_{\{S_j>0.5\}}$. 

Let $m_{\tilde J}$ be \textit{the smallest integer} median of $\tilde J$. Then, for any $j>m_{\tilde J}$, 
$S_j=1-\Pr(\tilde J < j) \le 1- \Pr(\tilde J \le  m_{\tilde J})\le 1/2$ because $\Pr(\tilde J \le m_{\tilde J})\ge 1/2$, by the definition of $m_{\tilde J}$, implying $\eta_j=0$. Therefore, we know the maximal number of nonzero $\eta_j$ cannot exceed $m_{\tilde J}$. However, other than the ideal case with $\Pr(J=0)=1$, $m_{\tilde J}$  can be zero, but not $-1$, because $\Pr(\tilde J=-1) = 0.5\Pr(J=0)<0.5$. This automatically implies that  condition (B)  is trivially satisfied. For this  choice of $\vec{\eta}$,  (\ref{newbound}) yields our new bound for $d_{\rm TV}(\pi_k, \pi)$:
\beqn
B_{k,L} &=& \sum_{j\ge 1} \min\{S_j, 1-S_j\} +0.5\Pr(J>0)\\
&=& \sum_{j\ge 1} \min\left\{\Pr(J\ge j), \Pr(J\le j)\right\}.
\label{eq:neat}
\eeqn
In deriving the last equality, we use the fact that $S_j+0.5 \Pr(J=j)=\Pr(J\ge j)$ and 
$1-S_j+0.5\Pr(J=j)=\Pr(J\le j)$, and $\Pr(J>0)=\sum_{j\ge 1}\Pr(J=j)$.

\subsection{Understand and Compare the Bounds}
It is immediate from expression (\ref{eq:neat}) that our new bound cannot exceed the bound of \cite{biswas2019estimating}, as given in (\ref{eq:biswas}), because (\ref{eq:neat}) obviously cannot exceed 
$\sum_{j\ge 1} \Pr(J\ge j)$, which is $\E[J]$. 
The next result reveals alternative forms for the new bound, providing  additional insights, including the optimality of the choice  $\eta_j=1_{\{j\le m_{\tilde J}\}}$. 
\begin{theorem}\label{th:newbound}
Under the same regularity conditions as in \cite{biswas2019estimating}, we have 
\beqn
\hskip -1in 
B_{k,L}&=&
 \operatorname{E}|\tilde J_{k,L}-m_{\tilde J_{k,L}}|+\Pr(J_{k,L}>0)  - 0.5\label{eq:medj} \\
&=&\operatorname{E}|J_{k,L}-m_{J_{k,L}}|+ \Pr(J_{k,L}>0)- S_{k, L} \label{eq:medtj}\\
&=& 0.5 \sum_{j\ge 1}[1-|\Pr(\tau>k+(j+1)L)+\Pr(\tau>k+jL)-1| ] \nonumber \\
&+& 0.5\Pr(\tau>k+L), \label{eq:newb} 
\eeqn
where $S_{k, L}=\max\{\Pr(J_{k,L}>m_{J_{k,L}}),\Pr(J_{k,L}<m_{J_{k,L}})\}\le 0.5$, and
$m_{\tilde J_{k,L}}$ and $m_{J_{k,L}}$ are  the smallest integer medians of $\tilde J_{k,L}$ and $J_{k,L}$, respectively.
\end{theorem}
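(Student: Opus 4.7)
The plan is to take as starting point the identity $B_{k,L} = \sum_{j\ge 1}\min\{\Pr(J\ge j), \Pr(J\le j)\}$ established in (\ref{eq:neat}) (writing $J = J_{k,L}$ for brevity throughout), and convert it into each of the three claimed forms in turn. I will rely on the classical tail-sum identity $\E|V-m| = \sum_{j\ge m+1}\Pr(V\ge j) + \sum_{j=0}^{m-1}\Pr(V\le j)$, valid for any non-negative integer-valued $V$ and any non-negative integer $m$. Equation (\ref{eq:newb}) is the most mechanical of the three: from $J = \max\{0, \lceil (\tau-L-k)/L\rceil\}$ one checks that $\Pr(J\ge j) = \Pr(\tau>k+jL)$ and $\Pr(J\le j) = 1 - \Pr(\tau>k+(j+1)L)$ for every $j\ge 1$. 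Applying the elementary identity $\min(a,b) = 0.5[a+b-|a-b|]$ term by term, the ``$a+b$'' pieces telescope to $\Pr(\tau>k+L)$, while the ``$|a-b|$'' pieces give exactly the claimed sum $0.5\sum_{j\ge 1}[1 - |\Pr(\tau>k+(j+1)L)+\Pr(\tau>k+jL)-1|]$.

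For equation (\ref{eq:medtj}), I would split the sum over $j$ at $j = m_J \equiv m_{J_{k,L}}$. By the minimality of $m_J$, one has $\Pr(J\le j) < 0.5 < \Pr(J\ge j)$ for every $j\le m_J-1$ and $\Pr(J\ge j) \le 0.5 \le \Pr(J\le j)$ for every $j\ge m_J+1$, so the non-central terms contribute $\sum_{j=1}^{m_J-1}\Pr(J\le j) + \sum_{j\ge m_J+1}\Pr(J\ge j) = \E|J-m_J| - \Pr(J=0)$ by the tail-sum identity. The central $j = m_J$ term satisfies $\Pr(J\ge m_J) = \Pr(J>m_J)+\Pr(J=m_J)$ and likewise for $\Pr(J\le m_J)$, so it simplifies to $\Pr(J=m_J) + \min\{\Pr(J>m_J),\Pr(J<m_J)\} = 1 - S_{k,L}$. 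Summing the two contributions and using $1 - \Pr(J=0) = \Pr(J>0)$ yields $B_{k,L} = \E|J-m_J| + \Pr(J>0) - S_{k,L}$.

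Equation (\ref{eq:medj}) I would prove by revisiting the derivation in Section~\ref{sec:opt}: rather than reducing the optimized upper bound to $\sum_j \min\{S_j,1-S_j\}$, plug the optimal $\eta_j = 1_{\{j\le m_{\tilde J}\}}$ directly into (\ref{newbound}). Pointwise one has $\sum_{j\ge 1}|1_{\{j\le \tilde J\}} - 1_{\{j\le m_{\tilde J}\}}| = |\max(\tilde J, 0) - m_{\tilde J}|$, since the summand is 1 precisely on lattice points $j\ge 1$ lying strictly between $\max(\tilde J, 0)$ and $m_{\tilde J}$. Taking expectations gives $\E|\tilde J - m_{\tilde J}| - 0.5\Pr(J=0)$, since the sole atom $\tilde J = -1$ (with mass $0.5\Pr(J=0)$) contributes a defect of 1 relative to $|\tilde J - m_{\tilde J}|$. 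Adding the $0.5\Pr(J>0)$ term in (\ref{newbound}) and using $\Pr(J=0)+\Pr(J>0)=1$ collapses the constants to $-0.5$, producing $\E|\tilde J - m_{\tilde J}| + \Pr(J>0) - 0.5$. The main bookkeeping obstacle throughout is the careful treatment of the boundary cases $m_J = 0$ (which makes the sum over $j=1,\ldots,m_J-1$ vacuous) and $\tilde J = -1$ (the source of the $-0.5\Pr(J=0)$ correction in the last step); apart from these, the proof is elementary rearrangement of infinite sums combined with standard median/tail identities.
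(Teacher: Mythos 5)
Your proposal is correct and follows essentially the same route as the paper's proof: equation (\ref{eq:medj}) by evaluating (\ref{newbound}) at the optimal $\eta_j=1_{\{j\le m_{\tilde J}\}}$ and accounting for the atom at $\tilde J=-1$, equation (\ref{eq:medtj}) by splitting (\ref{eq:neat}) at $j=m_J$ (your tail-sum identity is exactly what the paper's explicit computations of the pieces $A$ and $C$ establish), and equation (\ref{eq:newb}) via $\min\{a,b\}=0.5[a+b-|a-b|]$ with telescoping. The only point to finish is the $m_J=0$ case of (\ref{eq:medtj}), where the $-\Pr(J=0)$ correction disappears along with the central term and one must instead use $S_{k,L}=\Pr(J>0)$; you flag this boundary case, and the paper treats it explicitly.
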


\begin{proof}
To reduce the notation overload, we drop the $k,L$  for $J$, $\tilde J$, $m_J$, and $m_{\tilde J}$. We have already established that the optimal $\vec{\eta}$ must be of the form $\eta_j=1_{\{j\le m \}}$, for some $m\ge 0$. Note here that the use of $m=0$ permits $\vec{\eta}=0$ because $j\ge 1$. This is also consistent with setting $\eta_0=1$. We  can minimize the right-hand side of (\ref{newbound}) with respect to such a class, that is, with respect to the choice of $m$.  However, it is easy to see 
that 
\beqn
\sum_{j\ge 1} \E|1_{\{j\le \tilde J\}}-\eta_j|&=&\sum_{j\ge 0} \E|1_{\{j\le \tilde\nonumber J\}}-1_{\{j\le m\}}|-\E[1-1_{\{0\le \tilde J\}}]\\
&=&\sum_{j\ge 0} \E\left[1_{\{\min\{\tilde J, m\}<j\le 
\max\{\tilde J, m\}}\right]-\Pr(\tilde J=-1)\nonumber \\
&=&\E\left[\max\{\tilde J, m\}-\min\{\tilde J, m\}\right]-0.5\Pr(J=0)\nonumber 
\\\label{eq:minm}
&=&\E|\tilde J- m|-0.5\Pr(J=0).
\eeqn
It is clear from (\ref{eq:minm}) that the optimal $m$ must be an integer median of $\tilde J$, and  we choose the smallest one,  $m_{\tilde J}$. 
With this choice of $\vec{\eta}$, substituting (\ref{eq:minm}) into (\ref{newbound}) yields
the expression 
\beqn
B_{k,L}
&= & \E|\tilde J- m_{\tilde J}|-0.5\Pr(J=0)+0.5\Pr(J>0)\\ 
&=&\Pr(J>0)+\E|\tilde J-m_{\tilde J}|- 0.5,
\label{newbound11}
\eeqn
which proves (\ref{eq:medj}). 

In order to prove (\ref{eq:medtj}), we start from (\ref{eq:neat}). Let 
\beqn 
G(j)= \Pr(J\le j)- \Pr(J\ge j)=
\Pr(J< j)- \Pr(J> j),
\eeqn 
for $j \ge 0$.
Then, it is easy to verify that $G(j)$ is a monotone increasing function, which means $G(j)-G(m_J)$ share the same sign with $j-m_J$, for all $j\ne m_J$.   It follows  that the sum in (\ref{eq:neat}) can be decomposed into three parts, $A=\sum_{j=1}^{m_J-1}  \Pr(J\le j)$, $B=1_{\{m_J>0\}}\min\{\Pr(J\le m_J), \Pr(J\ge m_J)\}$, and $C=\sum_{j\ge m_J+1}  \Pr(J\ge j)$. When $m_J=0$, $C=\E[J]$, $B=0$ because  $1_{\{m_J>0\}}=0$, and  $A=0$  by convention because  $m_J-1<1$. If $p_j=\Pr(J=j)$, then  it is easy to see that whenever $m_J\ge 1$,
\beqn
A&=&\sum_{j=1}^{m_J-1} \sum_{h=1}^{j}p_h + (m_J-1)p_0 =\sum_{h=1}^{m_J-1}\sum_{j=h}^{m_J-1}p_h+(m_J-1)p_0\nonumber\\
&=& \sum_{h=1}^{m_J-1}(m_J-h)p_h +(m_J-1)p_0 =\sum_{h=0}^{m_J-1}(m_J-h)p_h - p_0;\label{eq:parta}\\
C&=&\sum_{j=m_J+1}^\infty  \sum_{h=j}^\infty p_h=\sum_{h=m_J+1}^\infty  \sum_{j=m_J+1}^h p_h=\sum_{h=m_J+1}^\infty  (h-m_J) p_h.
\label{eq:partc}
\eeqn
Noting that $(m_J-h)p_h=0$ when $h=m_J$, we see that when $m_J\ge 1$,
\beqn\nonumber
\hskip -1in A+B+C&=&\E|J-m_J|-p_0+\min\{\Pr(J\ge m_J),\Pr(J\le m_J)\}\\
&=&\E|J-m_J|+\Pr(J>0)+\min\{\Pr(J\ge m_J),\Pr(J\le m_J)\}-1\nonumber\\
&=&\E|J-m_J|+\Pr(J>0)-\max\{\Pr(J< m_J),\Pr(J >m_J)\}. \label{eq:abc}
\eeqn
When $m_J=0$, $A=B=0$, and $C=\sum_{h\ge 1}hp_h=\E[J]$, which is (\ref{eq:medtj}) because $S_{k, L}=\Pr(J>0)$, cancelling exactly the $\Pr(J>0)$ term. This completes the proof of (\ref{eq:medtj}).

The proof of (\ref{eq:newb}) also follows from (\ref{eq:neat}), using the identities $\max\{a, b\}=0.5[a+b+|a-b|]$ and   $\Pr(J\ge j)+\Pr(J\le j) = 1+\Pr(J=j)$, for any $j$. This leads to 
\beqn
&&\sum_{j\ge 1} \min\left\{\Pr(J\ge j), \Pr(J\le j)\right\}  \nonumber \\
&=&0.5\sum_{j\ge 1}\left[ 1+\Pr(J=j)- |\Pr(J\ge j)-1 + \Pr(J>j)|\right]
\nonumber\\
&=& 0.5\sum_{j\ge 1}\left[ 1- |\Pr(J> j)+\Pr(J>j-1)-1|\right] .
+0.5\Pr(J>0).\nonumber 
\eeqn
Expression (\ref{eq:newb}) then follows because 
$\{J>j\}=\{\tau>k+(j+1)L\}.$
\end{proof}

The above result tells us that, whenever $m_J=0$, our bound is identical to the one given by \cite{biswas2019estimating}.  From (\ref{eq:neat}), the two bounds are the same if and only if $G(1) \ge 0$,
which is the same as $2 p_0\ge 1- p_1$, where $p_k=\Pr(J=k)$. Clearly, this inequality is satisfied when $m_J=0$, that is, when $p_0\ge 1/2$. It also implies that $m_J\le 1$, because for any $j<m_J$,
\beqn
G(j)=\Pr(J<j)+\Pr(J\le j)-1\le 2\Pr(J<m_J)-1<0,\label{eq:gproof}
\eeqn
as $\Pr(J\le m_J-1)<0.5$, because $m_J$ is the smallest integer median. Therefore, we have the following theorem.
\begin{theorem}\label{th:iff} Under the same regularity conditions as those of Theorem~\ref{th:newbound}, a sufficient and necessary condition for the bound in Theorem~\ref{th:newbound} to equal $\operatorname{E}[J]$ is  $2 p_0\ge 1- p_1$. 
\end{theorem}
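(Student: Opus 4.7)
The plan is to compare the new bound $B_{k,L}$, in the form (\ref{eq:neat}), with $\E[J_{k,L}]=\sum_{j\ge 1}\Pr(J\ge j)$ in a termwise fashion. Since the summands of $B_{k,L}$ are $\min\{\Pr(J\ge j),\Pr(J\le j)\}$, each is bounded above by $\Pr(J\ge j)$, so $B_{k,L}\le \E[J_{k,L}]$ term by term. Consequently $B_{k,L}=\E[J_{k,L}]$ if and only if the two sequences of summands agree identically, i.e.\ $\Pr(J\ge j)\le \Pr(J\le j)$ for every $j\ge 1$. Rewriting this in terms of $G(j)=\Pr(J\le j)-\Pr(J\ge j)$, the equivalence becomes $G(j)\ge 0$ for all $j\ge 1$.

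Next I would invoke the monotonicity of $G$, which has already been observed in the proof of Theorem~\ref{th:newbound}: a one-line computation shows $G(j+1)-G(j)=p_{j+1}+p_j\ge 0$, so $G$ is non-decreasing on the non-negative integers. Therefore the infinite family of constraints $\{G(j)\ge 0:j\ge 1\}$ collapses to the single inequality $G(1)\ge 0$. A direct evaluation gives $G(1)=(p_0+p_1)-(1-p_0)=2p_0+p_1-1$, so $G(1)\ge 0$ is precisely the stated condition $2p_0\ge 1-p_1$.

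The converse (necessity) is handled by exhibiting a strict gap whenever $G(1)<0$: the $j=1$ summand of $B_{k,L}$ is then $p_0+p_1$ while that of $\E[J_{k,L}]$ is $1-p_0$, contributing a strict deficit of $1-2p_0-p_1>0$; since all remaining terms satisfy the termwise inequality derived above, $B_{k,L}<\E[J_{k,L}]$. This already appears in embryonic form in the paragraph immediately preceding Theorem~\ref{th:iff} (see especially \eqref{eq:gproof}), so my proof would be essentially a clean bookkeeping of those observations, with no genuine obstacle; the only point requiring mild care is ensuring convergence of all series, which is inherited from the regularity conditions of \cite{biswas2019estimating} that guarantee $\E[J_{k,L}]<\infty$.
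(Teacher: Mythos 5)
Your proof is correct and follows essentially the same route as the paper: a termwise comparison of \eqref{eq:neat} with $\sum_{j\ge 1}\Pr(J\ge j)=\E[J]$, reduction of the family of constraints $G(j)\ge 0$ to the single condition $G(1)\ge 0$ via the monotonicity of $G$, and the identity $G(1)=2p_0+p_1-1$. The only difference is that you spell out explicitly the steps the paper compresses into ``follows trivially,'' while the paper's proof environment spends its space instead on the relationship of the condition to the values of $m_J$.
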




\noindent {\it Remark 3} Theorem \ref{th:iff} implies that $m_J=0$ is a sufficient condition and $m_J\le 1$ is a necessary condition for the two bounds to be the same.  However, the condition $m_J=1$ itself is  not sufficient.  

\noindent {\it Remark 4} An intriguing new insight provided by bound (\ref{eq:medtj}) is that not only the average coupling time matters, but the variation of the coupling time is important too.  The $S_{k, L}$ term also suggests that even the symmetry
matters, because $S_{k, L}$ achieves its maximum when the distribution is symmetrical locally around the median.

Let $\zeta=\tau-t$, which is the number of steps needed  after time $t$ in order to couple  (assuming  the coupling has not already happened by time $t$). Then, the sufficient and necessary condition in Theorem~\ref{th:iff} is the same as 
\beq \Pr(\zeta\le L)\ge \Pr(\zeta>2L), \label{eq:skew}\eeq
suggesting that the new bound is more useful when the distribution of $\zeta$ places  more mass on the right side of the coupling interval $(L, 2L]$ than it does on its left side, that is, when (\ref{eq:skew}) is violated. The implication is that the improvement of the new bound, if any, will more likely come from those situations where either $t$ is small or $\tau$ is large (for fixed $L$), that is, when the mixing is poor.   

\section{Estimation and Practical Implementation}\label{sec:estim}
\setcounter{equation}{0}

We assume that $Q >1$ coupled processes $\{(X_t^{(q)},Y_t^{(q)}): \; 1\le q \le Q\}$ are run in parallel and that, for all $1\le q \le Q$,  ${\cal{X}}^{(q)}=\{X_k^{(q)}\}_{k\ge 1}$ and ${\cal{Y}}^{(q)}=\{Y_k^{(q)}\}_{k \ge 1}$ have been successfully  L-coupled.  The latter implies that the chains ${\cal{X}}^{(q)}$ are run $L$ more steps than the  ${\cal Y}^{(q)}$ chains,  and there exists a stopping time  $\{\tau^{(q)}: q=1, \ldots, Q\}$ such that $X_{t+L}^{(q)} = Y_{t}^{(q)}$, for all $t \ge \tau^{(q)}$. 

\subsection{Control Variate Estimators}

We  work with  a   modified version of  \eqref{eq:hkLb} that incorporates control variates:
\beqn
H^{*(q)}_{k,L} ({\cal X}^{(q)},{\cal Y}^{(q)}) &=& h\left(X_{k+J_{k,L}^{(q)}L}^{(q)}\right) + \sum_{j=0}^{J_{k,L}^{(q)}-1} \left[h(X_{k+jL}^{(q)})-h(Y_{k+jL}^{(q)})  \right]  \nonumber \\
&-& \sum_{j = 0}^{m_{\tilde J_{k,L}^{(q)}} }\left [ h(X_{k+jL}^{(q)})-h(Y_{k+jL}^{(q)}) \right ], 
\label{hksc}
\eeqn
where $m_{\tilde J}$ denotes the smallest integer median of $\tilde{J}$. Henceforth, in order to simplify the notation, we  use   
$m_{k,L}^{(q)}$ and $\tilde m_{k,L}^{(q)}$ to denote  $m_{J_{k,L}^{(q)}}$, and $m_{\tilde J_{k,L}^{(q)}}$, respectively.  An unbiased estimator for  $H^{*(q)}_{k,L} ({\cal X}^{(q)},{\cal Y}^{(q)})$ is straightforward to produce, but additional care must be paid to maintain the independence between the estimator for $\tilde m_{k,L}^{(q)}$ (or $m_{k,L}^{(q)}$)  and $({\cal X}^{(q)},{\cal Y}^{(q)})$. To satisfy the latter, we construct the unbiased estimators  $m_{k,L}^{(q)}$ and $\tilde m_{k,L}^{(q)}$ from all coupled processes but the $q$th one, as described in Algorithm \ref{algo1}.

\begin{algorithm}
\caption{Algorithm for computing $m_{k,L}^{(q)}$ and $\tilde m_{k,L}^{(q)}$for a fixed $k$ and all $q \in \{1,2,\ldots, Q\}$.}\label{algo1}
1.  Compute $J_{k,L}^{(q)}$\;
2. Sample independently $\zeta^{(q)} \sim Bernoulli(0.5)$  and set 
$\tilde J_{k,L}^{(q)} = J_{k,L}^{(q)} - \zeta^{(q)}$\;
3. Set $m_{k,L}^{(q)}=\lfloor med(\{J_{k,L}^{(h)}: \; 1\le h \le Q, \; h\ne q\}\rfloor$ and
$\tilde m_{k,L}^{(q)}=\lfloor med(\{\tilde J_{k,L}^{(h)}: \; 1\le h \le Q, \; h\ne q\}\rfloor,$
where $med(A)$ denotes the median of the values in set $A$ and $\lfloor \cdot \rfloor$ is the floor function.
\end{algorithm}

When $L=1$,  in order to reduce the variance of the unbiased estimator $H_k$ in (\ref{hk}), \cite{jacob2020unbiased} recommend using the time-averaging estimator 
$$H_{k:r}({\cal X},{\cal Y})={1\over r-k+1}\sum_{t=k}^r H_t({\cal X},{\cal Y}).$$    We follow the same strategy, and consider  the time-averaging version of (\ref{eq:hkLb}) 
\beqn
H_{k:r;L}^{(q)}({\cal X}^{(q)},{\cal Y}^{(q)}) &=& 
 \frac{1}{r-k+1}
\sum_{t=k}^r h(X^{(q)}_{t+J_{t,L}^{(q)} L}) \nonumber \\
&+& \frac{1}{r-k+1}
\sum_{t=k}^r \sum_{j=0}^{J_{t,L}^{(q)}-1} \left[h(X_{t+jL}^{(q)})-h(Y_{t+jL}^{(q)})\right], 
 \label{eq:ta-hkL}
\eeqn
and the average estimator that includes the control-variate swindle is then
\beq
H_{k:r;L}^{*(q)}({\cal X}^{(q)},{\cal Y}^{(q)})  = H_{k:r;L}^{(q)}({\cal X}^{(q)},{\cal Y}^{(q)})  -
\frac{1}{r-k+1} \sum_{t=k}^r \left \{  \sum_{j = 0}^{\tilde m_{k,L}^{(q)} }\left [ h(X_{t+jL}^{(q)})-h(Y_{t+jL}^{(q)}) \right ] \right\}.
\label{eq:ta-hkLc}
\eeq
Note that the original versions are obtained when $k=r$.  Because each term in the control-variate term above, 
$h(X_{t+jL}^{(q)})-h(Y_{t+jL}^{(q)})$, has mean zero, we expect that the gain from the control variate  swindle diminishes when $r$ increases owing to the law of large numbers, leading to the overall control-variate term approaching zero.  We see this phenomenon in Section~\ref{sec:prac}.

\subsection{Estimating the Total Variation Bound}
When estimating $B_{k,L}$, we can use  \eqref{eq:medj}, \eqref{eq:medtj}, or \eqref{eq:newb}. In  our numerical experiments we use
\eqref{eq:medtj}, along the steps described in Algorithm \ref{algo2}. 
\begin{algorithm}
\caption{Algorithm for estimation of $B_{k,L}$, for any given $k$ and $L$. }\label{algo2}
1. Compute $J_{k,L}^{(q)}$ and $m_{k,L}^{(q)}$, for all $q=1,\ldots,Q$ \; 
2. Compute the empirical means
$$e_{k,L} = {1\over Q} \sum_{q=1}^Q \left | J_{k,L}^{(q)} - m_{k,L}^{(q)} \right |, \quad p_{k,L} = {1\over Q}  \sum_{q=1}^Q \one_{\{J_{k,L}^{(q)} > 0\}} $$
$$a_{k,L}= {1\over Q}  \sum_{q=1}^Q \one_{\{J_{k,L}^{(q)} > m_{k,L}^{(q)}\}}, \quad 
b_{k,L}= {1\over Q}  \sum_{q=1}^Q \one_{\{J_{k,L}^{(q)} < m_{k,L}^{(q)}\}}; $$ 
3. Compute $$\hat B_{k,L} = e_{k,L} + p_{k,L} -  a_{k,L} \vee b_{k,L},$$ where $a \vee b$ denotes the maximum between $a$ and $b$.
\end{algorithm}

In the next section, we investigate the performance of the control variate swindle and compare the new total variation bound with  \eqref{eq:biswas} provided by \cite{biswas2019estimating}.

\section{Examples and Illustrations}\label{sec:prac}
\setcounter{equation}{0}

\subsection{A Theoretical Comparison of the Bounds: The Geometric Case}

The distribution of the coupling time $\tau_L$ is, in general, unknown.  However, there is one instance in which the distribution of $\tau_L$  is  exactly geometric. Specifically, when coupling two independent Metropolis samplers,  the maximal coupling procedure  uses the same proposal for both transition kernels, and coupling occurs when both chains accept it.  The tractability of derivations in the geometric case allow us to better understand theoretically the relationship between the bound in \cite{biswas2019estimating} and ours.

We consider the case in which $\delta=\tau-(L-1)\sim Geo(p)$. Because $J=\max\left\{0,  \ceil{\frac{\delta-k-1}{L}}\right\}$, we see that 
\beqn
   \Pr(J=0) &=& \Pr(\delta\le k+1)=1- q^{k+1}, \nonumber\\
    \Pr(J>j) &=&  \Pr(\delta> k+1+Lj) = q^{k+1+Lj}, \quad j=0, 1, \ldots,  \label{eq:geo}
\eeqn
where $q=1-p$. That is, the distribution of $J$ is a mixture of (i) the Dirac point measure  $\delta_{\{0\}}$ with mixture proportion $1-q^{k+1}$, and (ii) a geometric distribution with probability of success $1-q^L$ with weight $q^{k+1}$. This implies immediately that the bound given in \cite{biswas2019estimating} has the expression (\ref{eq:j-mean}). 

For our new bound, in this case, it is easier to use expression (\ref{eq:neat}) directly.  Let $m$ be the largest integer such that $\Pr(J\ge m) \ge \Pr(J\le m)$; that is,  $m$ is the largest integer that ensures
\beqn\label{eq:smallm}
q^{k+1+L(m-1)}+q^{k+1+Lm}\ge  1 \quad \Longleftrightarrow \quad    m=\left\lfloor \frac{L-k-1}{L}-\frac{\log[1+q^{L}]}{L\log(q)}  \right\rfloor.
\eeqn
Clearly, when $m\le 0$, our $B_{k, L}(p)$ is the same as the old bound (\ref{eq:j-mean}).  When $m\ge 1$,
we have
\beqn\label{eg:newb}
 B_{k, L}(p) &=& \sum_{j=1}^{m} 
\Pr(J\le j)+\sum_{j= m+1} 
\Pr(J\ge j) \nonumber\\ 
\{\rm{by}\  (\ref{eq:geo})\} &=& \sum_{j=1}^{m} 
\left[1-q^{k+1+Lj}\right]+\sum_{j =m+1}^{\infty} q^{k+1+L(j-1)}\nonumber\\
&=& m - \frac{q^{k+1+L}[1-q^{mL}]}{1-q^L}+ \frac{q^{k+1+mL}}{1-q^L}\nonumber
\\
&=& m - \frac{q^{k+1+L}[1-q^{mL}-
q^{(m-1)L}]}{1-q^L}.
\eeqn
We can also compute $B_{k, L}(p)$ directly from (\ref{eq:newb}) as an infinite sum
\beqn
B_{k,L}(p)= 0.5 \sum_{j\ge 1}[1-|q^{k+1+L(j-1)}+q^{k+1+Lj}-1| ] + 0.5q^{k+1}.
\label{newbound-r}
\eeqn

Figure \ref{fig:geo} compares the bounds (\ref{eq:j-mean})  (dashed line) and \eqref{newbound-r}  (solid line) for different values of $p$, $L$, and $t$. One can see that the new bound is sharper, and that only  for larger values of $p$, corresponding to fast-mixing chains, are the two bounds  indistinguishable. The horizontal line  in Figure \ref{fig:geo}  marks the obvious bound, because $d_{\rm TV} \le 1$. Note too that for very small values of $p$, both bounds are vacuous, but the new bound has a larger range for being  nonvacuous.

\begin{figure}
\centering
\includegraphics[width=0.95\textwidth]{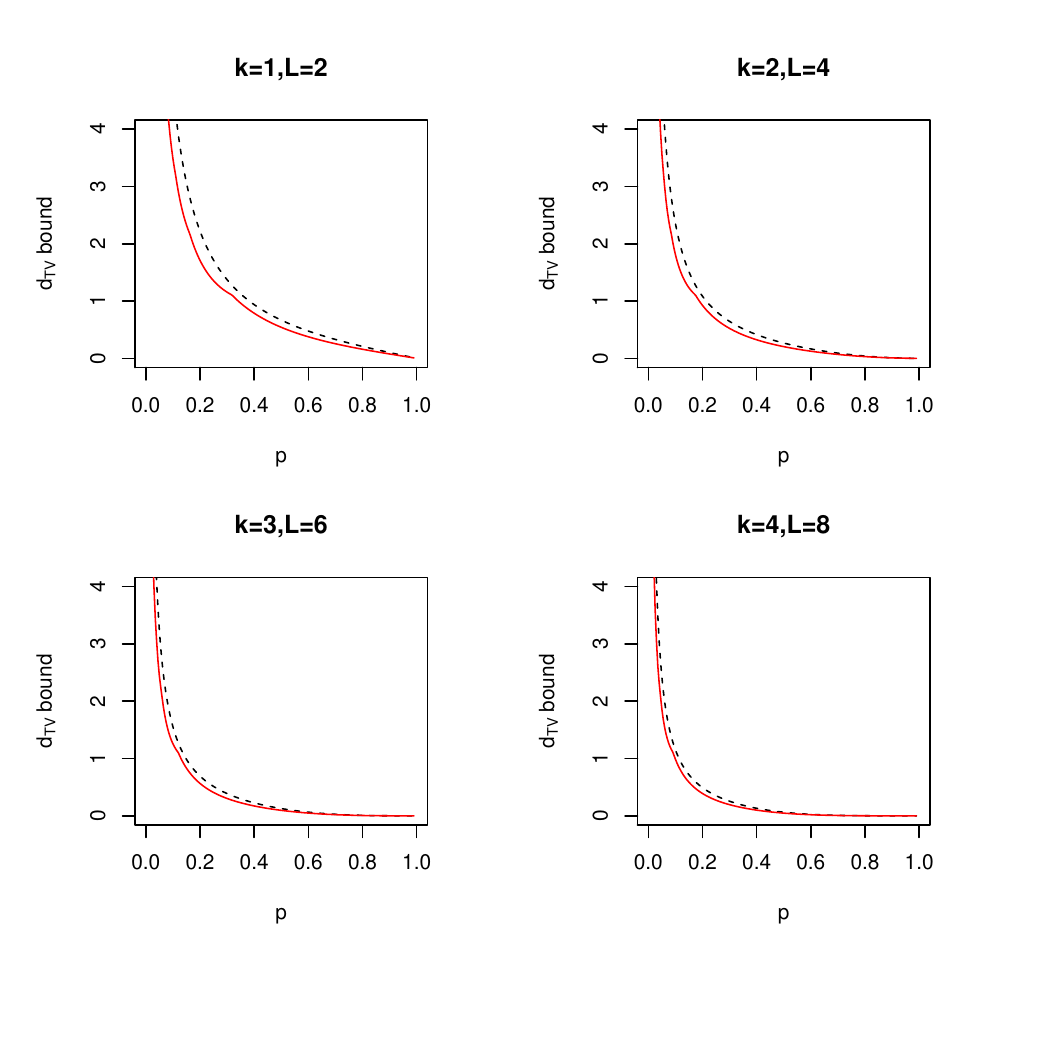}
\caption{Comparison of the bound \eqref{eq:biswas} provided by \cite{biswas2019estimating} (dashed  line), and the new bound  given in \eqref{newbound11} (solid  line).  Note that for small values of $p$, both bounds are vacuous.}
\label{fig:geo}
\end{figure}

\begin{figure}
\centering
\includegraphics[width=0.95\textwidth]{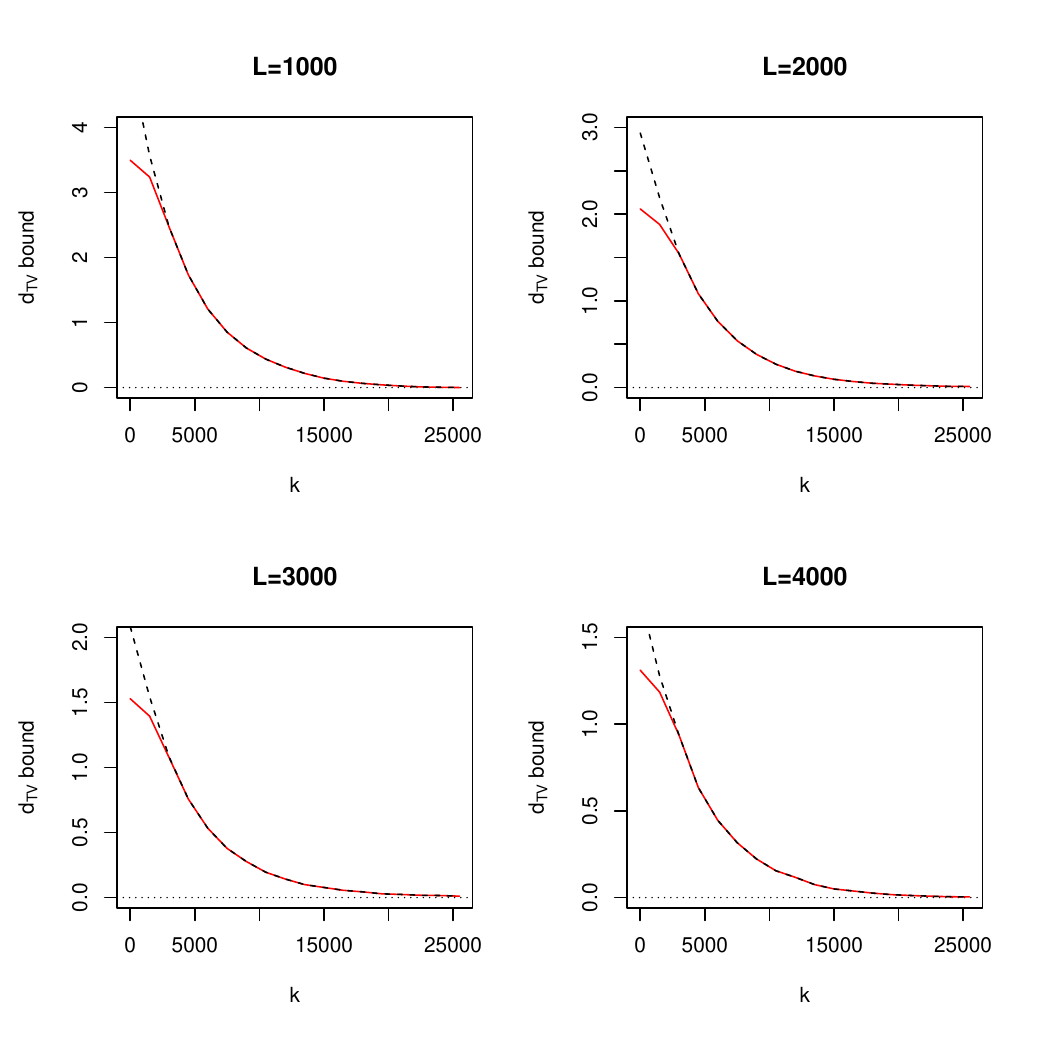}
\caption{Ising Model: Comparison of TV bounds for the PT algorithm for SGS for $L \in \{1000,2000,3000,4000\}$. The  dashed  line shows the bound  \eqref{eq:biswas} derived in \cite{biswas2019estimating},  and the solid  line shows the new bound  given in \eqref{newbound11}. 
}
\label{fig:ising1-4k}
\end{figure}

The simulations in the remaining two examples rely on the 
{\tt{unbiasedmcmc}} package of Pierre Jacob,  available from:\\
 {\tt https://github.com/pierrejacob/unbiasedmcmc/tree/master/vignettes}.  Additional programs for implementing the new ideas in this paper are available as supplemental material from the authors.

\subsection{An Empirical Comparison of the Bounds: Ising Model}

The Ising model example follows the setup in  \cite{biswas2019estimating}. We consider a $32 \times 32$ square lattice of  pixels  with values in $\{-1,1\}$, and with periodic boundaries. A state of the system  is then  $x \in \{-1,1\}^{32 \times 32}$, and the target probability is defined as $\pi_\beta(x) \propto \exp(\beta \sum_{i\sim j} x_ix_j)$, where $i \sim j$ means that $x_i$ and $x_j$ are pixel values in neighboring sites. This illustration uses the parallel tempering  algorithm \citep[PT, see][]{swendsen1986replica} coupled with a single site Gibbs (SSG) updating. It is known that larger values of $\beta$ increase the dependence between neighboring sites, and that this ``stickiness" leads to slow mixing of the SSG.  The target of interest corresponds to $\beta_0=0.46$ and  we use 12 chains, each corresponding to a different $\pi_\beta (x)$, with $\beta$ values equally spaced between $0.3$ and $\beta_0=0.46$. 
Figure \ref{fig:ising1-4k} shows the total variation bounds, where that provided by \eqref{eq:biswas} is shown as a dashed line and that from  \eqref{newbound11} is shown as a solid line. The bounds are derived for $1\le k <25,000$  and $L\in\{1000, 2000, 3000, 4000\}$. For smaller values of $L$, the patterns are similar, but TV bounds are larger for smaller values of $k$. The new bound, \eqref{newbound11},  is computed from  
$Q=50$ parallel runs, and is averaged over    $20$ independent replicates, while  \eqref{eq:biswas} is averaged over $1000$ independent replicates of a single coupled process.

Although the numerical results confirm that our new bound never exceeds the bound of  \cite{biswas2019estimating}, unfortunately, in this case, the improvement from our bound is visible only when it is not needed, that is, when both bounds exceed one.  Whereas this is a disappointment for our effort to improve the bound with a real gain, it is  good news for practitioners, because the bound in  \cite{biswas2019estimating} is a bit simpler to use.

\subsection{Comparing Bounds and Estimators: A Logistic Regression Example}

To compare  the bounds and the unbiased estimators,
we follow \cite{biswas2019estimating} and consider a Bayesian logistic regression model for the German credit data of \cite{lichman2013uci}. The data consist of $n=1000$ binary responses, $\{Y_i: \; 1\le i \le n\}$ 
and $d=49$ covariates, $\{x_i \in \RR^d; \: 1\le i \le n\}$. The response $Y_i$  indicates whether the $i$th individual is fit to receive credit ($Y_i=1$) or not ($Y_i =0$). The logistic regression model frames the probabilistic dependence between the response and covariate as $\Pr(Y_i=1|x_i)= [1+\exp(- x_i^T\beta)]^{-1}$. The prior is set to $\beta \sim N(0,10\I_d)$. Sampling from the posterior distribution is done using the P\'olya-Gamma sampler of \cite{polson2013bayesian}, using the R programs made available by  \cite{biswas2019estimating} at {\tt{
https://github.com/niloyb/LlagCouplings.}} In Figure \ref{fig:pg210}, we compare the bound of  \cite{biswas2019estimating} \eqref{eq:biswas} (dashed  line) with our bound \eqref{newbound11} (solid  line).  The bound in \eqref{eq:biswas}  is averaged over 2000 independent replicates. The new bound is computed from running 50 coupled processes in parallel and averaged over 40 replicates, yielding the same number of runs. The difference between the two bounds is apparent for smaller values of $L$ when the new bound is sharper for small values of $k$, but the gain diminishes quickly as $L$ increases.   

\begin{figure}[htp]
\centering
\includegraphics[width=0.95\textwidth]{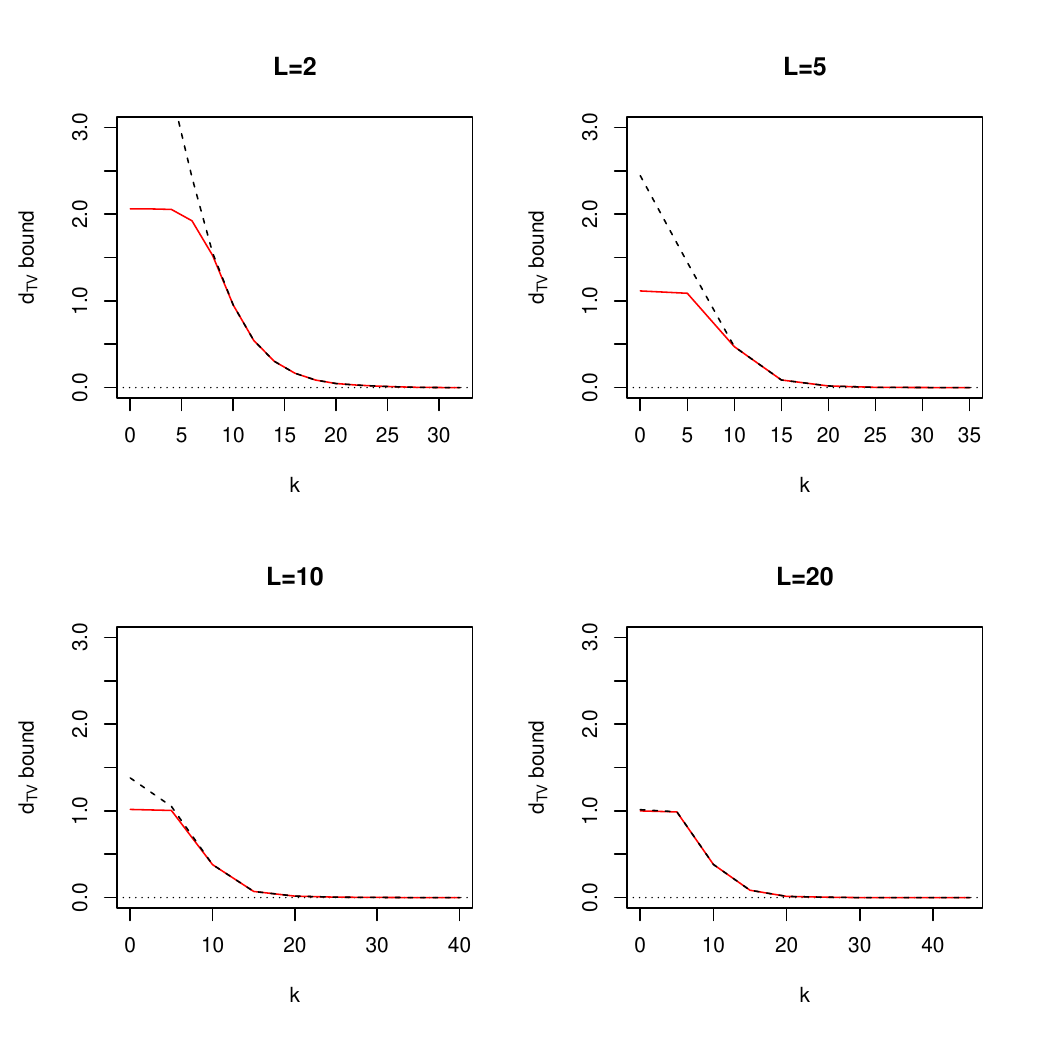}
\caption{German credit Ddta: Comparison of TV bounds for the P\'olya-Gamma sampler for $L \in \{2,5,10,20\}$. The  dashed  line shows the bound  \eqref{eq:biswas} derived in \cite{biswas2019estimating},  and the solid  line shows the new bound  given in \eqref{newbound11}. The bound from \eqref{newbound11}  is obtained from running 50 coupled chains in parallel and averaging over 40 independently replicated  experiments. The bound from \eqref{eq:biswas}  is averaged over 2000 independent replicates.  }
\label{fig:pg210}
\end{figure}

We are also interested in the gains in efficiency for the  Monte Carlo estimators when implementing the control variate swindle. Using 500 independent replicates of a single coupled process with lag $L=5$, we obtain Monte Carlo estimates of the posterior means for the 
regression coefficients.  In Figure  \ref{fig:pgsingle}, we present the relative reduction in variance (RRV), computed as 
RRV$={\rv_{MCCV}(\hat\beta) \over \rv_{MC}(\hat\beta)}$, where $\hat\beta$ is the posterior mean of the regression coefficients, $\beta \in \RR^{49}$, and $\rv_{MC}$ and $\rv_{MCCV}$ denote the estimated Monte Carlo variances of $\hat\beta$ obtained without and with the control variates, respectively. The left panel shows the RRV when using the single-run estimators \eqref{eq:hkLb} and \eqref{hksc}, while the right panel plots the RRV for the mean estimators \eqref{eq:ta-hkL} and \eqref{eq:ta-hkLc}. We see clearly that the gain is  significant for $r=k=5$ (left panels), but diminishes when $k=5$ and $r=30$ (right panels), as  discussed in Section~\ref{sec:estim}.

\begin{figure}[htp]
\centering
\includegraphics[width=0.45\textwidth]{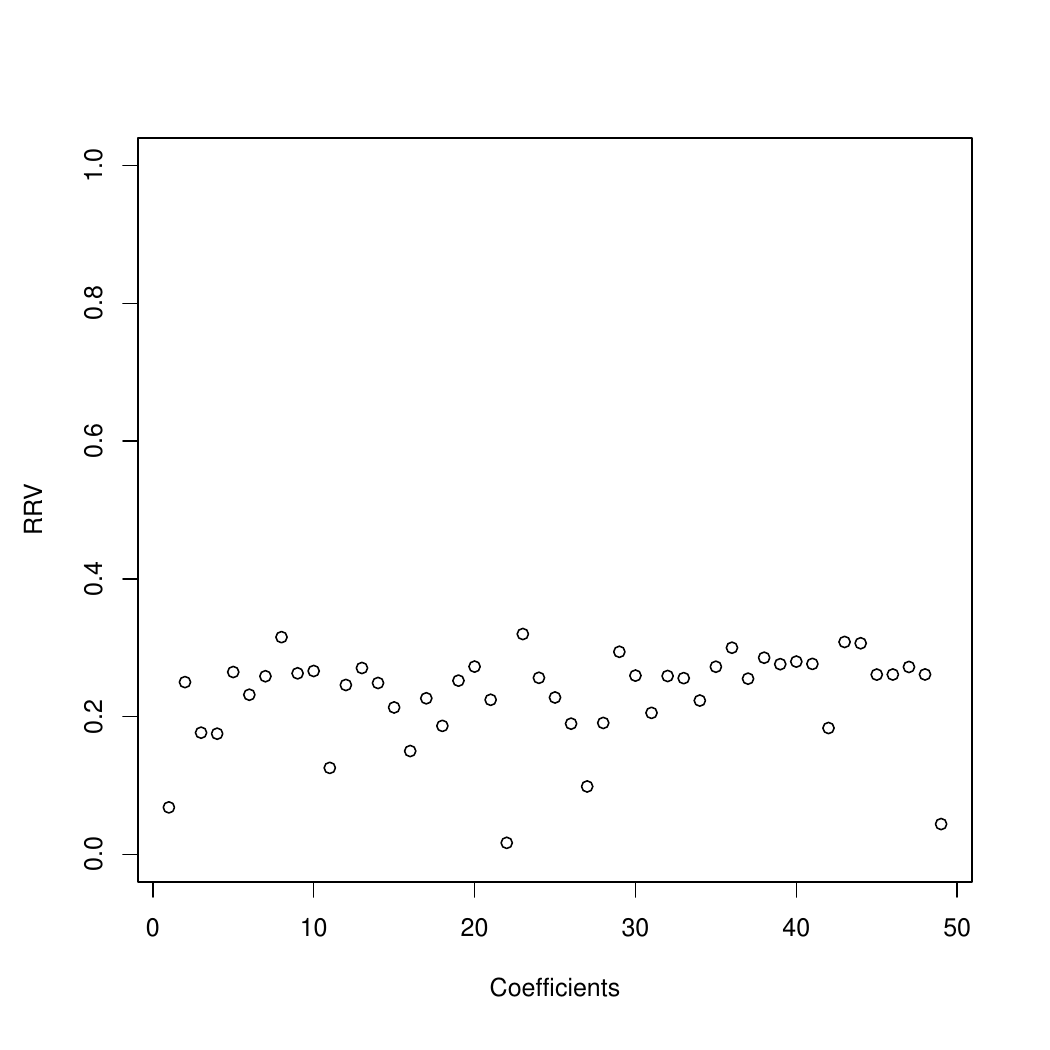}
\includegraphics[width=0.45\textwidth]{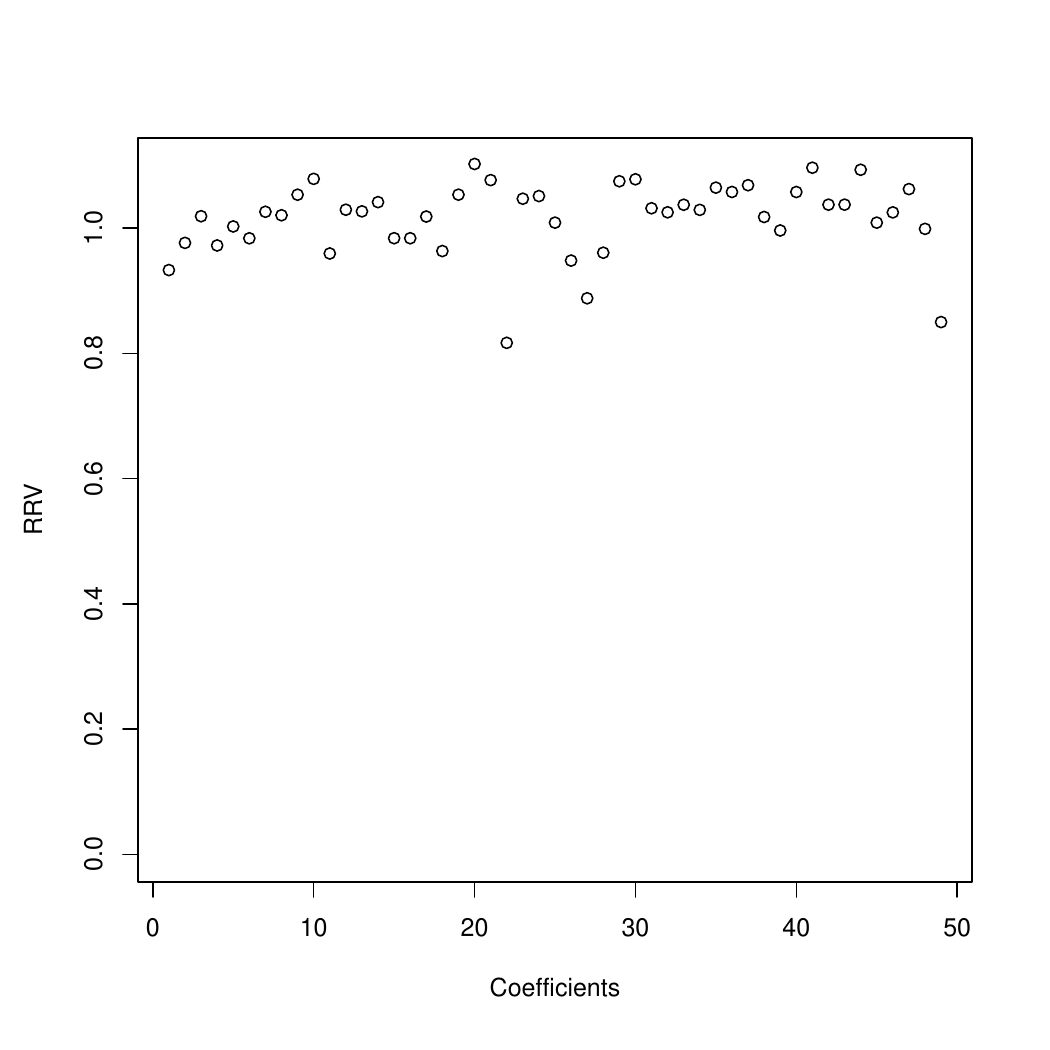}
\includegraphics[width=0.45\textwidth]{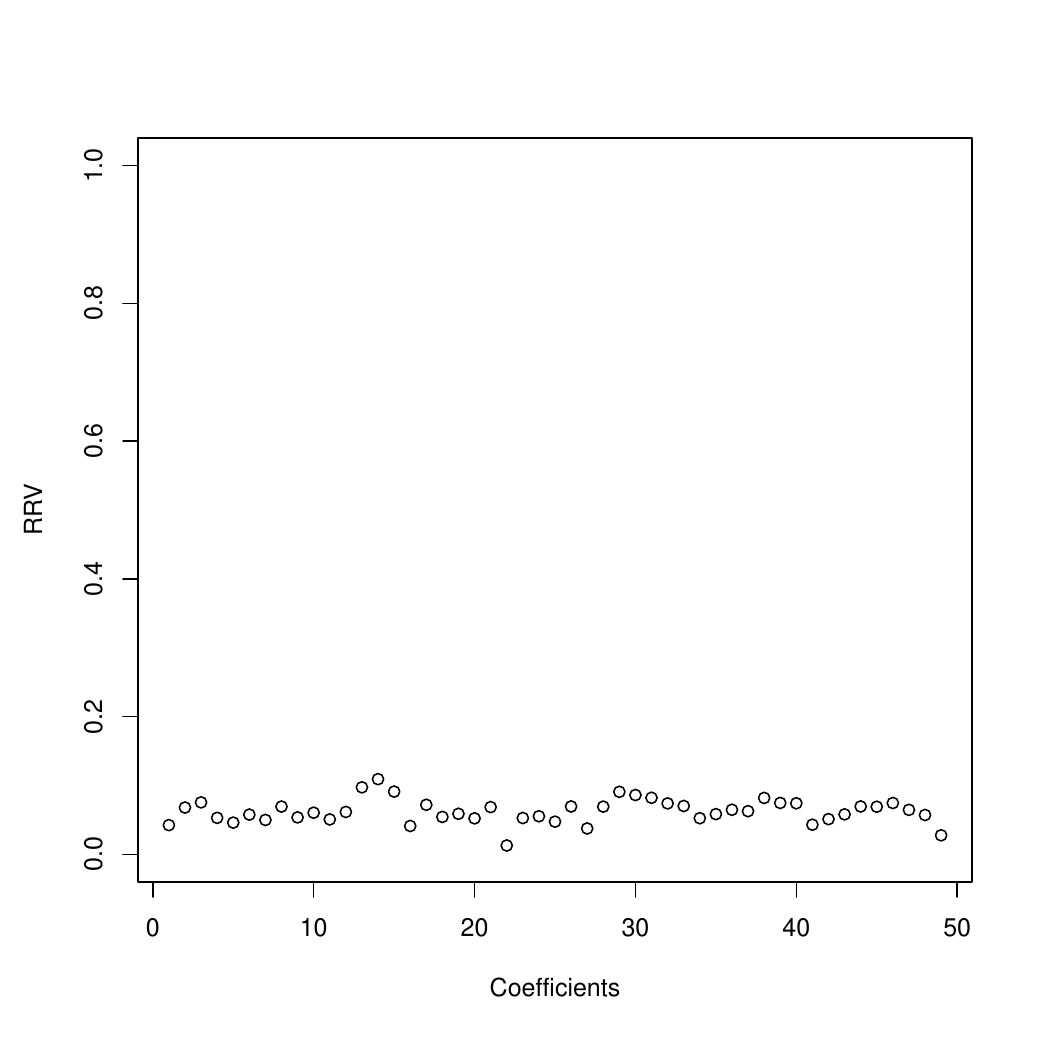}
\includegraphics[width=0.45\textwidth]{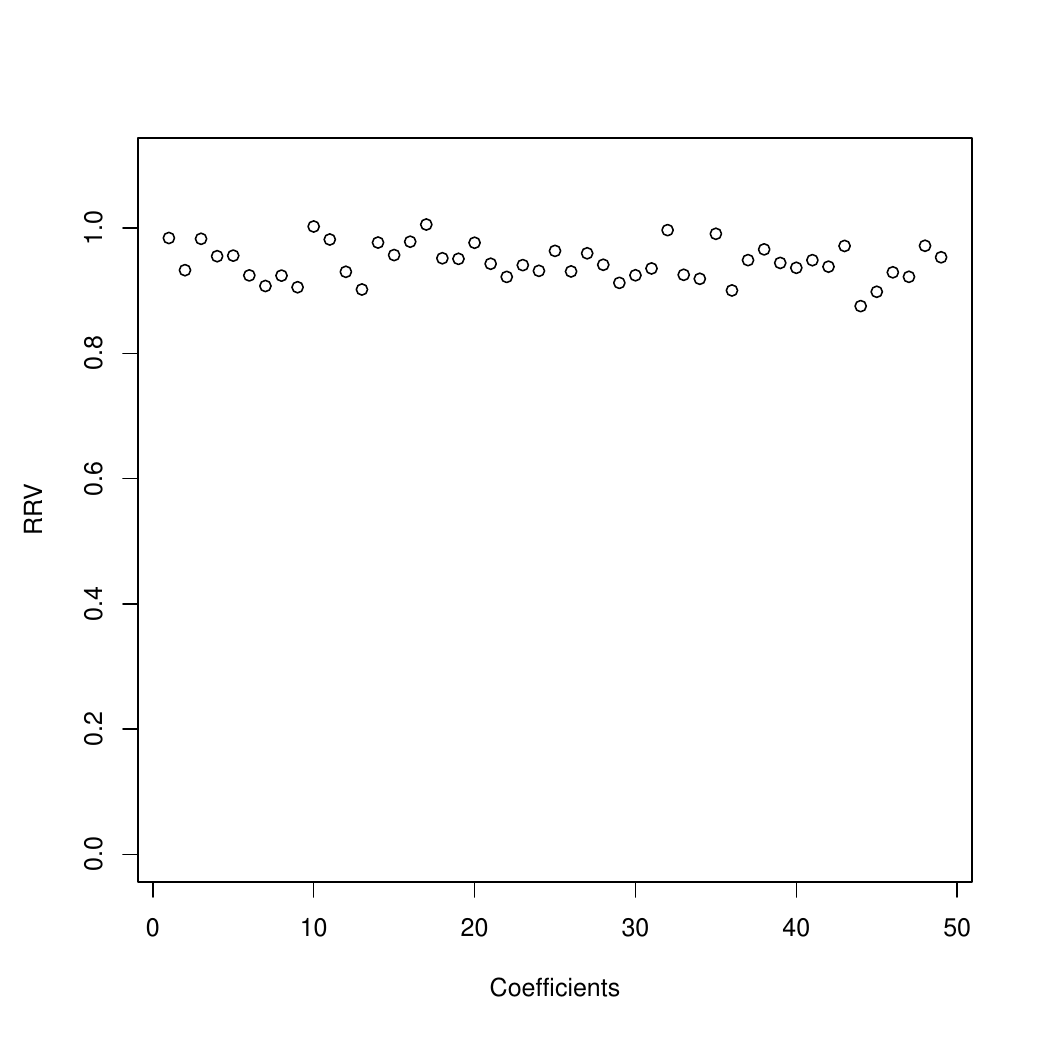}
\caption{German credit data.  Relative  reduction in variance (RRV) for the 49 regression coefficients. 
{\it Top panels}: the lag is $L=5$. {\it Bottom panels}: the lag is $L=20$. 
{\it Left panels}: RRV is obtained from the single estimators without and  with control variates, respectively, using $k=5$ in \eqref{eq:hkLb} and \eqref{hksc}. {\it Right panels}: RRV is obtained from the average estimators without and with control variates, respectively, using $k=5$ and $r=30$ in \eqref{eq:ta-hkL} and \eqref{eq:ta-hkLc}. 
}
\label{fig:pgsingle}
\end{figure}

\section{Can We Do Even Better?} \label{sec:fut}
The idea of L-lag coupling  has opened multiple avenues for future research. The use of control variates is just one of them. Although the practical gain is small or possibly even negative when we take into account the increased computation when computing the control variates, 
the theoretical gain is  intriguing, because  we obtain a theoretically superior bound without imposing any additional assumptions. This naturally raises the question of whether our bound is the best possible without further conditions.  We do not know.  We do not even know how to study  such a question theoretically, because to the best of our knowledge, this is the first
time  a tighter \textit{theoretical bound} has been obtained by a better \textit{empirical estimator}. 
Whereas seeking other more efficient estimators seems to be a natural direction, we must keep in mind that they would likely incur additional computational costs. 

One plausible direction is to go beyond linearly combining mean-zero control variates, although we had no success so far. However, even without seeking better bounds, our current bounds already offer the opportunity to investigate fresh perspectives for optimizing an MCMC kernel using adaptive ideas, and we intend to pursue these in future research. 

\section*{Acknowledgments}  We are grateful to Pierre Jacob for many useful comments and his gracefully patient guidance through the package {\tt unbiasedmcmc}, allowing us to perform the simulations in our study. We also thank Yves Atchad\'e, Tamas Papp, Christopher Sherlock, and Lei Sun for their helpful discussions and comments, and the NSERC of Canada (RVC) and NSF of USA (XLM) for their partial research support.

%
%
\end{document}